\theoremstyle{plain}
\newtheorem{lemma}{Lemma}
\newtheorem{theorem}{Theorem}
\newtheorem{definition}{Definition}
\theoremstyle{definition}
\def\greedy{{\mathsf{GREEDY}}}
\def\bb0{{\mathbb{0}}}
\def\bb{{\mathbf{b}}}
\def\b0{{\mathbf{0}}}
\def\bS{{\mathbf{S}}}
\def\b1{{\mathbf{1}}}
\def\cP{\mathcal{P}}
\def\sf0{{\mathsf{0}}}
\begin{document}

\newlength{\figurewidth}\setlength{\figurewidth}{0.6\columnwidth}



\title{\fontsize{23}{23}\selectfont Energy-Delay-Distortion Problem}

\newcounter{one}
\setcounter{one}{1}
\newcounter{two}
\setcounter{two}{2}

\addtolength{\floatsep}{-\baselineskip}
\addtolength{\dblfloatsep}{-\baselineskip}
\addtolength{\textfloatsep}{-\baselineskip}
\addtolength{\dbltextfloatsep}{-\baselineskip}
\addtolength{\abovedisplayskip}{-1ex}
\addtolength{\belowdisplayskip}{-1ex}
\addtolength{\abovedisplayshortskip}{-1ex}
\addtolength{\belowdisplayshortskip}{-1ex}

%


\author{\IEEEauthorblockN{Rahul Vaze}
\IEEEauthorblockA{
Tata Institute of Fundamental Research\\
 Homi Bhabha Road, Mumbai 400005\\
Email: vaze@tcs.tifr.res.in}
\\
\IEEEauthorblockN{Akshat Choube}
\IEEEauthorblockA{Indian Institute of Technology\\
Palakkad, India\\
Email:111501031@smail.iitpkd.ac.in }
\and
\IEEEauthorblockN{Shreyas Chaudhari}
\IEEEauthorblockA{Indian Institute of Technology,\\
Madras, India,
\\
Email: shreyaschaudhari@gmail.com\\
}\\
 
\IEEEauthorblockN{Nitin Aggarwal}
\IEEEauthorblockA{Indian Institute of Technology,\\
Roorkee, India,
\\
Email: nitinagg1235@gmail.com\\
}

}

%
\maketitle
\begin{abstract}
An energy-limited source trying to transmit multiple packets to a destination with possibly different sizes is considered. With limited energy, the source cannot potentially transmit all bits of all packets. In addition, there is a delay cost associated with each packet. Thus, the source has to choose, how many bits to transmit for each packet, and the order in which to transmit these bits, to minimize the cost of distortion (introduced by transmitting lower number of bits) and queueing plus transmission delay, across all packets. Assuming an exponential metric for distortion loss and linear delay cost, we show that the optimization problem is jointly convex. Hence, the problem can be exactly solved using convex solvers, however, because of the complicated expression derived from the KKT conditions, no closed form solution can be found even with the simplest cost function choice made in the paper, also the optimal order in which packets should be transmitted needs to be found via brute force. To facilitate a more structured solution, a discretized version of the problem is also considered, where time and energy are divided in discrete amounts. In any time slot (fixed length), bits belonging to any one packet can be transmitted, while any discrete number of energy quanta can be used in any slot corresponding to any one packet, such that the total energy constraint is satisfied. The discretized problem is a special case of a multi-partitioning problem, where each packet's utility is super-modular and the proposed greedy solution is shown to incur cost that is at most $2$-times of the optimal cost.

\end{abstract}
\section{Introduction} \label{sec:intro}
Rate-distortion problem is a classical problem, where the objective is to find minimum transmission rate to support a given distortion constraint under a specific distortion metric. Typically, the problem is considered for single source-destination pair, with average power constraints, and optimal results on the rate-distortion problem are derived when infinitely large blocklengths are allowed \cite{Cover2004}. Rate-distortion with finite blocklengths has been considered in \cite{gupta2008rate} and \cite{kostina2012fixed}.

Real-time communication, communication under quality-of-service (QoS) constraint, energy harvesting communication etc., only allows for short delays with limited energy (not necessarily average power constraint). To address the distortion problem under this more practical regime,   
 we consider in this paper that at the beginning of communication, there are $n$ packets available with the source, each with possibly distinct sizes ($B_i$ bits). The total amount of energy available with the source is limited, and hence the source can only transmit a fraction (${\hat B}_i$ bits) of each packet that introduces/forces a distortion. Each packet has a cost that consists of two components: distortion and delay. For each packet, distortion measures the gap in $B_i$ and ${\hat B}_i$ bits, and is naturally a decreasing function of ${\hat B}_i$, while delay consists of queuing delay (transmission time of other packets before it) plus its own transmission delay. 

We choose a natural distortion function $2^{B_i - {\hat B}_i}$ for each packet, that has the diminishing returns property, i.e., the rate of decrease in cost decreases with increasing ${\hat B}_i$ and is convex. 
The overall cost is the sum of the cost of all packets, and the problem is to find optimal ${\hat B}_i$, transmission time $t_i$ for each packets, and the order in which these $n$ packets are sent to minimize the overall cost. We assume a lossless communication model, where if energy $e$ is used for time $t$, $b$ bits can be send using the Shannon formula $b = t \log_2(1+\frac{e}{t})$.

Our choice of distortion function has also been used in a related problem \cite{orhan2015source}, where the cost function is just the sum of the distortion for each packet, but does not include any delay cost. In \cite{orhan2015source}, without the delay cost, optimal closed form solutions have been obtained. On the other hand, the problem where only delay cost is counted and the distortion cost is neglected has been considered in \cite{yang2008delay}. Other variants of rate-distortion problems under energy constraints without delay cost can be found in \cite{zordan2014performance, biason2017energy, bhat2016distortion, arafa2017near}.

The problem considered in this paper is also related to the scheduling problem considered in \cite{hou2009theory}, where at the beginning of each frame, $n$ packets of equal size arrive at the source, and either they are transmitted successfully by the end of the frame or dropped completely. A lossy model for communication is used in \cite{hou2009theory}, where in each slot of the frame, a packet transmitted is successful with probability $p$ or erased otherwise, and the decision variable is to decide which packet to transmit in any slot among the ones that have not been transmitted successfully by then, to maximize throughput.

An online version of the considered problem, which is the part of ongoing work, considers that $n$ packets (each with possibly distinct sizes, $B_i$ bits) arrive at the source at distinct times and under the cost function described above, the problem is to find how many bits to send for each packet, when to begin its transmission, and for how long to transmit its bits. A moment's thought will reflect that the online version is a general case of the age of information problem \cite{kaul2012real}, where $B_i=1$ and ${\hat B}_i \in \{0,1\}$, i.e., the problem is to minimize the delay between the time at which the packet with one bit arrives at the source and the time at which the receiver knows about it, if at all. 

Thus, the problem formulation introduced in this paper is quite general, and addresses two important problems : rate-distortion problem with finite delays (where rate constraint is an artefact of limited energy), and the generalized age of information problem, where the generalization includes the dependence of the identity of packets and their sizes on the cost function.

Our contributions are as follows:
\begin{itemize} 
\item We show that the considered energy-delay-distortion problem is jointly convex for a fixed order of transmission of packets, and hence can be solved using any convex solver. Unfortunately, however, even for the simplest choice of reasonable cost function, the KKT conditions (though sufficient for optimality) cannot be used to find a structured solution, and the optimal order of packet transmission. This is in contrast to \cite{orhan2015source}, where closed form solution is found when the delay cost is not included. Thus, including the delay cost, not only makes the problem more practically relevant, but is also fundamentally different analytically than \cite{orhan2015source}. Moreover, the joint convexity of the problem, does not reveal anything about the optimal order in which packets should be transmitted, and the intuitive choice of transmitting shorter packets first is not provable, though seen via simulations.
\item To get a structured solution that does not require a brute force search over all possible orders of packet transmissions, we also consider a discretized version of the problem. In the discretized version,  time is divided into discrete slots, and any one slot can be used only to transmit bits belonging to the same packet. In addition, we also discretize the energy into small units of $e$ each, that is the least amount of energy that will be used in one slot. Thus, the equivalent problem is to find an allocation of {\it resource blocks} (rectangles of height (energy) $e$ and width (slot time $\ell$)) to packets, under the total energy constraint, such that the objective function is minimized. 
\item This discretized version is a discrete optimization problem, which in general is harder to solve compared to a continuous (and convex problem in this case) one. The structure of the problem, however, comes to the rescue by noting the fact that the discrete problem is a special case of the multi-partitioning problem, where the objective is to partition a given set of resources among multiple agents to minimize an overall objective function. For the discretized problem, we show that a greedy algorithm achieves at most twice the cost of the optimal solution, via exploiting the super-modularity of the cost function for each of the packets. Thus, the discretized model allows the use of  a simple structured solution that is guaranteed to be close to the optimal. 
\end{itemize}

\section{Problem Formulation} 
Consider a source that has $n$ packets with $B_i, i=1,\dots, n$ bits each, that it wants to communicate to its destination. 
The total energy available with the source is $E$, that can be used to transmit any bits of the $n$ packets. Finite $E$ limits the number of bits that can be sent from the source to its destination, and thus the source has to judiciously choose how many bits of each packet can be sent, and the order in which the packets should be sent since that also impacts the QoS.

To make this precise, let the source send ${\hat B}_i$ out of $B_i$ bits of packet $n$ using energy $E_i$ and time $t_i$. The actual method to compress $B_i$ bits to ${\hat B}_i$ bits is out of scope of this paper, and can be found in quantization literature.
Let the $i^{th}$ packet be sent at the $\pi(i)^{th}$ location, then the cost for packet $i$ is defined as 
$$U_i = 2^{B_i-{\hat B}_i} + \sum_{k\in \pi(k) \le \pi(i)} t_k + t_i,$$
where the second term is the queuing delay.
The overall cost of the source is 
\begin{equation}\label{eq:util}
U = \sum_{i=1}^n U_i.
\end{equation}
Then the optimization problem is 
\begin{equation}\label{eq:prob}
\min_{E_i, t_i, \sum_{i=1}^n E_i \le E}U.
\end{equation}
The choice of $U_i$ is motivated by the fact that any natural cost function has a diminishing returns property such that its incremental decrease reduces as ${\hat B}_i$ increases. The delay component counts the delay of packet $i$ as well as the queuing delay that it experiences because of transmission of packets transmitted before it.

Problem \ref{eq:prob} has connections with the rate-distortion theory in finite time and energy, which to the best of our knowledge is unsolved. To be specific, the first term of $U_i$, $2^{B_i-{\hat B}_i}$ measures the distortion for packet $i$, and the rate restriction follows because of finite energy $E$ and the presence of other packets. The linear delay term weights the rate at which packets are being delivered to the destination. One can keep $U_i$ a general function of $B_i, {\hat B}_i$, however, the specific choice made here is quite natural, that has diminishing returns property, and is convex, without making the problem trivial.

We use the Shannon rate formula to relate the ${\hat B}_i$, $E_i$ and $t_i$ for packet $i$, that is given by 
$${\hat B}_i = t_i \log \left(1+\frac{E_i}{t_i}\right).$$ Using this rate formula, we can write Problem \ref{eq:prob}, as a function of $E_i$ or $t_i$ alone.
Even under this 'simple' rate formula, Problem \ref{eq:prob}
remains challenging, where even figuring out the optimal order in which packets should be sent cannot be solved in closed form. 
\section{Optimal Solution For Problem \ref{eq:prob}}
In the following, we first establish that Problem \ref{eq:prob} is jointly convex problem under the Shannon-rate formula. Establishing that Problem \ref{eq:prob} is convex in both $E_i$ or $t_i$ individually is rather easy.
\begin{theorem}\label{thm:jointconv} For a fixed order of packet transmission $\pi$, Problem \ref{eq:prob} is jointly convex problem in $E_i$ and $t_i$.
\end{theorem}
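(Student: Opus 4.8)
The plan is to exploit the standard composition and perspective rules for convexity rather than to compute a Hessian by brute force. First I would observe that the objective $U = \sum_{i=1}^n U_i$ is a sum, and that for a fixed order $\pi$ each delay contribution $\sum_{k:\pi(k)\le\pi(i)} t_k + t_i$ is an affine (hence convex) function of the vector $(t_1,\dots,t_n)$, while the energy budget $\sum_{i=1}^n E_i \le E$ together with $E_i \ge 0$, $t_i \ge 0$ carves out a convex (polyhedral) feasible set. Since a nonnegative sum of convex functions is convex, it therefore suffices to prove that each distortion term $2^{B_i - \hat{B}_i}$ is jointly convex in $(E_i, t_i)$.

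Next I would rewrite the distortion term using the Shannon relation $\hat{B}_i = t_i \log\left(1 + \frac{E_i}{t_i}\right)$. The crucial structural fact is that $\hat{B}_i$ is exactly the perspective of the scalar map $h(x) = \log(1+x)$: writing $\hat{B}_i = t_i\, h\!\left(\frac{E_i}{t_i}\right)$ for $t_i > 0$, and recalling that the perspective $t\,h(x/t)$ of a concave function $h$ is jointly concave in $(x,t)$, the concavity of $h(x) = \log(1+x)$ (its second derivative is $-(1+x)^{-2} < 0$) immediately yields that $\hat{B}_i$ is jointly concave in $(E_i, t_i)$ on the domain $t_i > 0$, $E_i \ge 0$.

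Finally I would apply the composition rule for convexity. The outer function $\phi(u) = 2^{B_i} \cdot 2^{-u} = 2^{B_i - u}$ is convex and nonincreasing in $u$, and we have just shown that $u = \hat{B}_i$ is concave in $(E_i, t_i)$. Since the composition of a convex nonincreasing function with a concave function is convex, $2^{B_i - \hat{B}_i} = \phi(\hat{B}_i)$ is jointly convex. Summing over $i$ and adding the affine delay terms, $U$ is jointly convex in $(E_i, t_i)$ over the convex feasible set, which completes the proof.

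The only real obstacle is establishing the joint concavity of $\hat{B}_i$; phrasing it as a perspective function sidesteps a direct $2\times 2$ Hessian computation, which is feasible but considerably messier and error-prone. A minor point to address is the behaviour at the boundary $t_i = 0$, where the perspective is defined by its limiting value $0$ and joint convexity of the composed term extends by continuity, but this does not affect the argument.
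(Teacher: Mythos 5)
Your proof is correct, but it takes a genuinely different route from the paper. The paper proves joint convexity by brute force: it writes down the $2n\times 2n$ Hessian of $U$, observes it is block diagonal with one $2\times 2$ block per packet, and then verifies positive (semi-)definiteness of each block via Sylvester's criterion, including an explicit computation showing the block determinant equals $P^2\frac{\ell_1}{(\ell_1+e_1)^2}\log_2^2\bigl(\frac{\ell_1}{\ell_1+e_1}\bigr) > 0$. You instead recognize $\hat{B}_i = t_i\log\bigl(1+\frac{E_i}{t_i}\bigr)$ as the perspective of the concave function $\log(1+x)$, conclude joint concavity of $\hat{B}_i$ in $(E_i,t_i)$, and then compose with the convex nonincreasing map $u\mapsto 2^{B_i-u}$; the delay terms are affine for fixed $\pi$ and the feasible set is polyhedral. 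Your argument is shorter, far less error-prone, and more general --- it works verbatim for any concave rate function in place of the Shannon formula and any convex nonincreasing distortion in place of $2^{B_i-\hat{B}_i}$, whereas the paper's calculation is tied to the specific functional forms. What the Hessian computation buys in exchange is explicit curvature information (the strictly positive block determinant shows the distortion term is strictly convex wherever $\log\bigl(\frac{\ell_1}{\ell_1+e_1}\bigr)\ne 0$), which the composition argument does not directly provide. Your remark about the boundary $t_i=0$ (handled by the limiting value of the perspective and continuity) is the right thing to flag and is adequately dispatched.
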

Proof can be found in Appendix \ref{app:jointconv}.
Using the joint convexity, Problem \ref{eq:prob} can be solved efficiently by any of the convex solvers. Importantly, however, finding the optimal order of packet transmission $\pi$ remains a challenge. One bottleneck in doing so is that the KKT conditions for this problem, which will be sufficient because of Theorem \ref{thm:jointconv}, lead to exponential functions in the variables of interest that cannot be solved in closed form. 
The KKT conditions can be obtained directly by taking a derivative of \eqref{eq:util}, however, we  do not describe them here since they are unwieldy.  Thus, the KKT conditions do not allow any analytical tractability, and no structure for the optimal solution can be extracted from the KKT conditions. Thus, one has to rely on the convex solvers to solve this problem for a fixed $\pi$, and optimize over $\pi$ thereafter. In the next section, we present a structural solution to the problem that obviates the need for optimizing over $\pi$, which essentially has exponential complexity.

\section{Discretized Variant of Problem \ref{eq:prob}}
In this section, we work towards finding a more structured solution for a discretized variant of Problem \ref{eq:prob} for which we can find theoretical guarantees on the performance. To facilitate this, we consider a discretized version of Problem \ref{eq:prob}, where time is divided in discrete slots of short fixed length $\ell$. In each slot, bits from at most one packet can be sent, however, bits from the same packet can be sent in multiple non-contiguous slots. 
Let set $\cP_i$ be the set of slots assigned to packet $i$. We also discretize the energy into small units of $e$ each, that is the least amount of energy that will be used in one slot. We define a {\it resource block}  as a rectangle of height (energy) $e$ and width (slot time $\ell$). For slot $j$, the number of resource blocks is defined as $R_j$. Since any one slot is reserved for bits from any one packet, $e R_j$ is the amount of energy used for transmission of bits for packet $i$ if $j \in \cP_i$. 

For slots $j \in \cP_i$, the cumulative bits sent using resource blocks $R_j$ is ${\hat B}_i$, where  
$${\hat B}_i =  \ell \log \left(1+\frac{e R_j}{\ell}\right).$$  Then the cost $D_i$ for packet $i$ is 
\begin{equation}\label{eq:Di}
D_i = 2^{(B_i-\sum_{j \in \cP_i}{\hat B}_i)} + \ell \ i_{max},
\end{equation}
where $i_{max} = \max \{j : j \in \cP_i\}$ is the last slot where any bits of packet $i$ are sent.

The total energy consumption under this setup is $\sum_{j=1}^J e R_j$, where $J$ is the total number of slots used for transmission.
Then the optimization problem is 
\begin{equation}\label{eq:discprob}
\min_{\cP_i, R_j, \sum_{j=1}^J e R_j \le E} \sum_{i=1}^n D_i.
\end{equation}
Problem \ref{eq:discprob} is a discrete optimization problem, since $\cP_i$ and $R_j$ are discrete sets and only one packet can be assigned to any one slot. Thus, it is not evident that Problem \ref{eq:discprob} is any easier than Problem \ref{eq:prob}. To understand how to efficiently solve Problem \ref{eq:discprob}, we need the following preliminaries.

\begin{definition}\label{definition:submod}
Let $V$ be a finite set, and let $2^V$ be the power set of $V$. A real-valued set function $f:2^V\to\mathbb{R}$ is said to be {\it monotone} if $f(S) \ge f(T)$ for $S \subseteq T \subseteq V$,  and {\it super-modular} if 
\begin{equation}
  \label{eq:22}
f(S)+f(T)\le f(S\cap T)+f(S\cup T),\ \forall S,T\in 2^V.   
\end{equation}
An equivalent definition of super-modularity is 
\begin{equation}\label{eq:usefuldefnsubmod0}
f(S \cup \{i\}) + f(S \cup \{j\}) \le f(S) + f(S \cup \{i, j\})
\end{equation}
for every $S\subseteq V$ and every $i,j\in V\setminus S$ with $i\ne j$. 
\end{definition}

Let $\bS$ be a finite set and let $f_i$, $1\le i\le k$, be set
functions from $2^\bS$ to the real numbers. The multi-partitioning
problem is defined as follows.
\begin{definition}{\bf Multi-partitioning problem:}\label{defn:multipart}
Partition a given (resource) set $\bS$  into $k$ subsets $S_1, \dots, S_k, S_i \cap S_j = \phi, \cup_{i=1}^k S_i = \bS$ such that $\sum_{i=1}^k f_i(S_i)$ is minimized.
\end{definition}

\begin{theorem}[\cite{Nemhauser1978,Lehmann2006combinatorial}]\label{theorem:Nemhauser}
If all functions $f_i$ in the multi-partitioning problem are
  non-negative, monotone, and super-modular, then the $\greedy_1$
  algorithm outputs a partition whose cost is at most 
  twice that of the optimal partition.
\end{theorem}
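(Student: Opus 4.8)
The plan is to prove the guarantee by the classical charging argument for greedy partitioning, recast through the complementary sub-modular utilities. First I would make the $\greedy_1$ rule explicit: fix an order $e_1,\dots,e_m$ of the ground set $\bS$, maintain a partial partition, and assign each incoming element $e_t$ to the part $i$ whose cost drops the most, i.e.\ to $\argmax_i \big(f_i(A_i)-f_i(A_i\cup\{e_t\})\big)$, where $A_i$ is the current content of part $i$. To expose the combinatorial structure I would pass to the savings functions $g_i(S):=f_i(\emptyset)-f_i(S)$. Because each $f_i$ is non-negative and monotone (non-increasing) we get $g_i\ge 0$, $g_i(\emptyset)=0$, and $g_i$ monotone non-decreasing; because $f_i$ is super-modular, $-f_i$ and hence $g_i$ is sub-modular, so the super-modularity inequality \eqref{eq:usefuldefnsubmod0} becomes the diminishing-returns (sub-modularity) inequality for $g_i$. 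Since $\sum_i f_i(S_i)=\sum_i f_i(\emptyset)-\sum_i g_i(S_i)$ and $\sum_i f_i(\emptyset)$ is a fixed additive constant, minimizing total cost is the same as maximizing total savings $\sum_i g_i(S_i)$, and $\greedy_1$ is exactly the greedy rule that hands each element to the part of maximum marginal savings.

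Second, I would run the standard exchange argument on the savings. Write $\{G_i\}$ for the greedy partition and $\{O_i\}$ for an optimal one, and decompose greedy's total savings as $\sum_i g_i(G_i)=\sum_t \delta_t$, where $\delta_t$ is the marginal savings realized at the step that placed $e_t$. To bound the optimum I would charge each element to that step: if the optimum routes $e_t$ to part $i$, then by monotonicity $g_i(O_i)\le g_i(O_i\cup G_i)$, and by sub-modularity the added value $g_i(O_i\cup G_i)-g_i(G_i)$ is at most the sum over $e\in O_i\setminus G_i$ of the marginal of $e$ on top of the \emph{final} set $G_i$; each such marginal is in turn at most the marginal greedy saw when it processed $e$ (diminishing returns on the then-smaller set), which is at most $\delta_t$ by the greedy selection rule. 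Since every element lies in at most one $O_i$, summing over $i$ telescopes to $\sum_i g_i(O_i)\le \sum_i g_i(G_i)+\sum_t\delta_t = 2\sum_i g_i(G_i)$, so greedy secures at least half of the attainable savings.

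Finally, I would translate this half-of-savings relation back into the cost formulation via $g_i=f_i(\emptyset)-f_i$ to read off the stated factor-$2$ guarantee, which is the content quoted from \cite{Nemhauser1978,Lehmann2006combinatorial}. The step I expect to be the main obstacle is twofold. The combinatorial heart is the exchange inequality itself: one must insert the optimum's elements into the final greedy configuration in a consistent single pass so that sub-modularity applies element-by-element, with monotonicity guaranteeing the discarded cross terms only help and keeping every $\delta_t$ non-negative. The more delicate point is bookkeeping which side the factor of two sits on — the argument most cleanly bounds $\sum_i g_i(O_i)$ against $2\sum_i g_i(G_i)$ on the \emph{savings} scale, and recovering the clean ``cost at most twice optimal'' statement leans on the non-negativity of the $f_i$ together with the complementary relation between the two objectives; I would treat aligning these two normalizations as the crux and otherwise defer the routine verifications to the cited references.
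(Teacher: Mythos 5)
Your overall route --- passing to the savings functions $g_i(S)=f_i(\emptyset)-f_i(S)$, noting that they are normalized, non-decreasing and sub-modular, and then running the standard greedy exchange argument to conclude $\sum_i g_i(O_i)\le 2\sum_i g_i(G_i)$ --- is exactly the route of the cited works \cite{Nemhauser1978,Lehmann2006combinatorial} (the paper itself offers no proof, only the citation), and that half of your sketch is sound modulo routine bookkeeping of which greedy step each element of $O_i$ is charged to.

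The genuine gap is the step you yourself flag as the crux and defer: converting the factor-$2$ guarantee on the savings scale into a factor-$2$ guarantee on the cost scale. That conversion does not go through. From $\sum_i g_i(G_i)\ge\tfrac12\sum_i g_i(O_i)$ you only obtain
\[
\sum_i f_i(G_i)\;=\;\sum_i f_i(\emptyset)-\sum_i g_i(G_i)\;\le\;\tfrac12\sum_i f_i(\emptyset)+\tfrac12\sum_i f_i(O_i),
\]
which is bounded by $2\sum_i f_i(O_i)$ only when $\sum_i f_i(\emptyset)\le 3\sum_i f_i(O_i)$; non-negativity of the $f_i$ does not supply this, because approximation ratios are not preserved under the affine change of objective $f_i\mapsto f_i(\emptyset)-f_i$. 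Concretely, take $\bS=\{a,b\}$, $k=2$, $f_1(\emptyset)=1$ and $f_1(S)=0$ for every non-empty $S$, while $f_2(\emptyset)=f_2(\{b\})=1$ and $f_2(\{a\})=f_2(\{a,b\})=\epsilon$. Both functions are non-negative, monotone and super-modular in the sense of Definition~\ref{definition:submod}, and under either greedy rule $a$ and then $b$ land in part $1$, for a final cost of $1$, whereas the partition $a\mapsto 2$, $b\mapsto 1$ costs $\epsilon$; the cost ratio is unbounded as $\epsilon\to 0$. So the multiplicative cost statement cannot be recovered by your plan (or at all, without an extra hypothesis relating $\sum_i f_i(\emptyset)$ to the optimum); what is actually provable, and what the references prove, is the savings-scale bound $\sum_i g_i(G_i)\ge\tfrac12\sum_i g_i(O_i)$. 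A secondary issue: the rule you analyze (assign $e_t$ to the part with the largest drop $f_i(A_i)-f_i(A_i\cup\{e_t\})$) is the one the exchange argument requires, but it differs from the rule in Fig.~\ref{fig:greedyalgo}, which picks $\arg\min_j f_j(S_j\cup\{i\})$; these disagree whenever the current part costs $f_j(S_j)$ differ, and your charging inequality ``at most $\delta_t$ by the greedy selection rule'' fails for the latter rule.
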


Problem \ref{eq:discprob} is a multi-partitioning problem, where the resource set $\bS$ is the set of resource blocks $R_j$ for $\cP_i$ that needs to be partitioned among the $n$ packets, such that the total energy constraint is satisfied $\sum_{j=1}^J e R_j \le E$, where slot $j \in P_i$ for some $i$ (packet). 
One major difference is in the definition of resource blocks that are {\it dynamic} for Problem \ref{eq:discprob} rather than being fixed ahead of time. To be clear, given a set of existing resource blocks, 
$R_j, j \in \cP_i$, $R_j \rightarrow R_j+1$ is allowed only when if additional bits from packet $i$ are sent using extra energy $e$ in that slot. Moreover, for any $R_{j_1}, R_{j_2}, j_1, j_2 \in \cP_i$, 
a new resource block is added to $R_{j_1}$ or $R_{j_2}$, depending on which one reduces the cost for packet $i$ more. Only fixed constraint is that the total number of resource blocks is at most $E/e$, i.e., $\sum_{j=1}^J e R_j \le E$.

\begin{figure}
\begin{tabular}{r l}
\hline
& $\greedy_1$ algorithm\\
\hline
1	&	Initialize  $S_j = \Phi$, $j =1,\ldots, k$, $i=1$. \\
2	&	Find $j^\star =\arg\min_{j}f_{j}(S_j\cup\{i\})$.	\\
3 &	Update $S_{j^\star} = S_{j^\star} \cup \{i\}$.\\
4	&	Set $i=i+1$, Stop if $i>|\bS|$, \\
      & otherwise go to step 2.\\
5	&	Return $S_j$, $j =1,\dots, k$.\\
\hline
\end{tabular} 
\caption{Greedy allocation algorithm for the multi-partitioning
  problem.}
\label{fig:greedyalgo}
\end{figure}

To solve Problem \ref{eq:discprob}, consider the $\greedy$ algorithm in Fig. \ref{fig:newgreedy}, which 
allocates a new resource block to the packet that reduces the incremental cost the most. The novelty in this greedy algorithm is that both the packet index set $\cP_i$ (which slot to assign for packet $i$) and which packet to assign to a new resource block (if created at a previously un-allotted slot) is being found out greedily, since given the existing set of resource blocks $R_j, j \in \cP_i$, a new resource block can be created at any of the existing slots where $R_j > 0$ i.e. $j \in \cP_i$ for some $i=1,\dots, n$ or at a new slot where $R_j =0, j \notin \cP_i$ for any $i = 1,\dots, n$.

We illustrate the functioning of the $\greedy$  algorithm for solving Problem \ref{eq:discprob} in Fig. \ref{fig:greedy}. In the considered iteration of the $\greedy$ algorithm, the distinct colored (other than cyan) rectangles of Fig. \ref{fig:greedy} are resource blocks that have been already assigned to different packets, where the same color represents blocks that are assigned to the same packet. 
The new resource block that is to be assigned is among the candidate resource blocks (denoted A) that could either be allocated to one of the slots ($j^\star$) that is already occupied by some packet or a completely new slot, depending on which choice makes the largest decrease in cost, given the earlier allocation. In case, a resource block is assigned to a previously empty slot, the algorithm also describes which packet ($i^\star$) should be assigned to that block.

\begin{theorem}\label{theorem:newgreedy}

If all functions $D_i$ in Problem \ref{eq:discprob} are
  non-negative, monotone, and super-modular, then the $\greedy$ algorithm \ref{fig:newgreedy} outputs a partition (allocation of resource blocks) within a factor of $2$ of the optimal partition.
\end{theorem}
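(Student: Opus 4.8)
The plan is to exhibit Problem \ref{eq:discprob} as an instance of the multi-partitioning problem of Definition \ref{defn:multipart} and then invoke Theorem \ref{theorem:Nemhauser} verbatim: the hypotheses of non-negativity, monotonicity and super-modularity of the $D_i$ are precisely those required there, so once the reduction is in place and the $\greedy$ algorithm of Fig. \ref{fig:newgreedy} is shown to coincide with $\greedy_1$ of Fig. \ref{fig:greedyalgo}, the factor-$2$ bound is immediate. Almost all of the content is thus in setting up the reduction correctly.

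First I would fix the ground set. The energy budget caps the number of resource blocks at $E/e$, so I take $\bS = \{1, \dots, \lfloor E/e \rfloor\}$, one element per energy quantum, and the $k=n$ agents to be the packets. A feasible allocation is then a partition $S_1, \dots, S_n$ of a subset of $\bS$, where $S_i$ collects the quanta spent on packet $i$ and the cardinality bound $\sum_i |S_i| \le E/e$ encodes $\sum_j e R_j \le E$. I define the agent cost $f_i(S_i) := D_i$ by evaluating \eqref{eq:Di} at the cost-minimizing placement of the $|S_i|$ quanta into slots of packet $i$; that is, the slot set $\cP_i$ and the per-slot counts $R_j$ are chosen inside $f_i$ to minimize \eqref{eq:Di} subject to $\sum_{j \in \cP_i} R_j = |S_i|$. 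With this definition $\sum_i f_i(S_i) = \sum_i D_i$ is the objective of Problem \ref{eq:discprob}, and minimizing over partitions of $\bS$ is exactly Problem \ref{eq:discprob}.

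Next I would check that the $\greedy$ algorithm realizes $\greedy_1$ on this instance. Both process the quanta one at a time. Since assigning the new quantum to packet $i$ leaves every other agent's term untouched, minimizing the resulting total cost is the same as choosing the packet of smallest incremental cost, which is the selection rule of step~2 of $\greedy_1$. By the definition of $f_i$, this incremental cost already ranges over the placement choices, so the combined ``which packet / which slot (existing or fresh)'' decision described for Fig. \ref{fig:newgreedy} is nothing but the single $\arg\min$ of $\greedy_1$. The two algorithms therefore trace out identical partitions, and Theorem \ref{theorem:Nemhauser} applies.

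The crux is justifying the \emph{dynamic} resource blocks, i.e. that folding the slot placement into $f_i$ is legitimate, and this is where I expect the real difficulty. Two points must be settled. The $\greedy$ algorithm assigns each new block to a slot greedily rather than re-solving the whole placement; I would show, using the concavity of $\log(1 + eR_j/\ell)$ (so that the marginal bits contributed by an added quantum decrease in $R_j$), that this per-block rule reproduces the optimal placement for a fixed count $|S_i|$, so that $f_i$ is indeed what $\greedy$ computes. More delicate is the coupling through the delay term $\ell\, i_{max}$: slot indices are shared across packets, so a priori $f_i(S_i)$ could depend on the other $S_{i'}$. I would remove this by arguing that a cost-minimizing schedule packs occupied slots into the lowest available indices, whence the last-slot index, and hence the delay contribution, is pinned down by the number of occupied slots in the prefix, a quantity determined by the cardinalities alone. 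This is the step that turns the $D_i$ into genuine set functions on $2^{\bS}$, and is exactly where the monotonicity and super-modularity hypotheses are consumed, making Theorem \ref{theorem:Nemhauser} applicable.
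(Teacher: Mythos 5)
Your overall strategy is exactly the paper's: the paper's entire proof of Theorem \ref{theorem:newgreedy} is the single sentence ``Proof is similar to Theorem \ref{theorem:Nemhauser},'' i.e.\ reduce to the multi-partitioning problem and invoke the known factor-$2$ guarantee for the greedy partition under non-negative, monotone, super-modular costs. Your first two paragraphs (ground set $=$ energy quanta, agents $=$ packets, $\greedy$ of Fig.~\ref{fig:newgreedy} as an instantiation of $\greedy_1$) reconstruct that intended reduction faithfully, and in that sense you have matched the paper. Note also that the theorem is stated \emph{conditionally} --- ``if all functions $D_i$ are non-negative, monotone, and super-modular'' --- which presupposes that the $D_i$ are bona fide set functions on $2^{\bS}$; establishing that they actually are (and that they have these properties) is delegated to the Lemma that follows, not to this theorem. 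So the burden you take on in your third paragraph is, strictly speaking, beyond what the theorem asks.

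That said, the third paragraph is where your argument has a genuine gap, and it is worth naming because the paper itself never closes it. Your proposed fix for the delay coupling --- pack occupied slots into the lowest indices so that ``the delay contribution is pinned down by the number of occupied slots in the prefix, a quantity determined by the cardinalities alone'' --- does not make $f_i$ a function of $S_i$ alone. The term $\ell\, i_{max}$ in \eqref{eq:Di} involves a \emph{global} slot index: even after packing, the index of packet $i$'s last slot depends on how many slots the \emph{other} packets occupy before it, i.e.\ on $|S_{i'}|$ (and on the interleaving) for $i' \ne i$. So $f_i(S_i)$ as you define it is really $f_i(S_1,\dots,S_n)$, the costs are not separable across agents, and Definition \ref{defn:multipart} and Theorem \ref{theorem:Nemhauser} cannot be invoked verbatim; monotonicity and super-modularity hypotheses do not repair this, since they are properties of set functions that you have not yet exhibited. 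A second, smaller gap: your identification of the two algorithms needs that the per-block greedy slot placement inside a packet reproduces the placement-minimum defining $f_i$ at every cardinality. The concavity argument you sketch handles the distortion term (balanced allocation, reachable greedily), but once the delay term and the shared slot indices are included, a greedy decision to open a new slot early can be irrevocably suboptimal for larger cardinalities, so this step also requires a real argument. To be fair to you, both issues are present and unaddressed in the paper's own one-line proof and in its Lemma; your write-up has the virtue of making them visible.
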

Proof is similar to Theorem \ref{theorem:Nemhauser}.
\begin{figure*}
\begin{tabular}{r l}
\hline
& $\greedy$ algorithm\\
\hline
1	&	Initialize $\cP_i = \phi$ $R_j = 0, R_j \in \cP_i$, $\forall \ i$. \\
2.  & Let $\text{last} = \max\{j: j \in \cP_i\}$ for some $i$  \%Last slot that has been assigned to any packet \\
3. &{\bf For} slot $j=1:\text{last}$ \\ 
4. & If{ energy is not exhausted : $\sum_{j=1}^{\text{last}} e R_j \le E$}.\\
5. &Find the slot $j$ (or packet $i : j \in \cP_i$) that benefits the user $i$ most by allocating  \\
 & a new resource block with $R_j=R_j+1$, i.e., \\
 &$j^\star=\arg\min_{j=1, \dots, \text{last}, j \in \cP_i} D_i(R_{-j} \cup R_j+1)$ where $R_{-j} = \cup_{k \in \cP_i} R_k \backslash R_{j}$\\%
\\
6.  & Find the packet $i, i=1,\dots, n$ that benefits most by allocating first resource block at slot $\text{last}+1$, i.e.,  \\
& $i^\star=\arg\min_{i=1, \dots, n, j \in \cP_i} D_i(R_j \cup 1_{\text{last}+1})$ \\
\\
7. & Make $R_{j^\star} = R_{j^\star} +1$ if  $\min_{j=1, \dots, \text{last}, j \in \cP_i} D_i(R_{-j} \cup R_j+1) < \min_{i=1, \dots, n, R_j \in \cP_i} D_i(R_j \cup 1_{\text{last}+1})$\\
& Otherwise create a new slot and assign it to $i^\star$, 
i.e., slot $\text{last}+1 \in \cP_{i^\star}$ and $R_{\text{last}+1} = 1$
 \\
8.    &            Update $\text{last} = \max\{j: j \in \cP_i\}$, go to step 3\\
 & {\bf End}  \\
9.	&	Return $R_j$, $j \in \cP_i$.\\
\hline
\end{tabular} 
\caption{Greedy allocation algorithm for Problem \ref{eq:discprob}.}
\label{fig:newgreedy}
\end{figure*}


To make use of Theorem \ref{theorem:newgreedy}, we now show that the cost functions $D_i$ are non-negative, monotone and super-modular as follows.
\begin{lemma} The cost function $D_i$ is non-negative, monotone and super-modular
\end{lemma}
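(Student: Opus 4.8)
The plan is to verify the three properties separately for the two summands of $D_i$ in \eqref{eq:Di}. Write $D_i = d_i + \delta_i$, where $d_i = 2^{(B_i-\sum_{j\in\cP_i}{\hat B}_i)}$ is the distortion term and $\delta_i = \ell\, i_{max}$ is the delay term. Non-negativity is immediate: the exponential $d_i$ is strictly positive and $\delta_i \ge 0$ since $\ell>0$ and $i_{max}\ge 0$, so $D_i \ge 0$. The real content is the super-modularity and monotonicity of $d_i$, together with reconciling the opposite behaviour of $\delta_i$.

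First I would put the distortion term in multiplicative form. Interpreting $\log$ as $\log_2$ (bits), so that $2^{-\ell\log_2(1+eR_j/\ell)} = (1+eR_j/\ell)^{-\ell}$, gives
\[
d_i(S) = 2^{B_i}\prod_{j\in\cP_i}\psi(R_j), \qquad \psi(R) := \Big(1+\tfrac{eR}{\ell}\Big)^{-\ell}.
\]
The function $\psi$ is positive, strictly decreasing, and strictly convex (one checks $\psi''(R)=\tfrac{e^2(\ell+1)}{\ell}(1+eR/\ell)^{-\ell-2}>0$). Monotonicity of $d_i$ is then clear: enlarging $S$ either lowers an existing factor $\psi(R_j)$ or introduces a new factor $\psi(1)<1$, so $d_i(S)\ge d_i(T)$ whenever $S\subseteq T$.

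For super-modularity of $d_i$ I would use the incremental form \eqref{eq:usefuldefnsubmod0} and split into two cases according to whether the added blocks $a,b\in V\setminus S$ touch the same slot. If they affect distinct slots, adding $a$ scales the product $d_i(S)$ by a factor $\alpha<1$ and adding $b$ by $\beta<1$, while adding both scales by $\alpha\beta$; the required inequality $\alpha+\beta\le 1+\alpha\beta$ is just $(1-\alpha)(1-\beta)\ge 0$. If $a,b$ fall in the same slot of current occupancy $R$, then factoring out the common product $Q$ over the remaining slots reduces the inequality to $\psi(R)+\psi(R+2)\ge 2\psi(R+1)$, i.e.\ exactly the discrete convexity of $\psi$ established above. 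Hence $d_i$ is super-modular.

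The hard part will be the delay term $\delta_i=\ell\max\{j:j\in\cP_i\}$. As a maximum of slot indices it is, in isolation, \emph{sub}-modular rather than super-modular, and it \emph{increases} as $S$ grows, directly opposing both properties needed for $D_i$. The crux is therefore to show that the super-modularity and monotonicity of the (typically large, $\propto 2^{B_i}$) distortion term dominate the delay contribution over the allocations actually produced. I would exploit the structure of the $\greedy$ algorithm of Fig.~\ref{fig:newgreedy}: new slots are opened only at the frontier $\text{last}+1$, so along the greedy block-addition sequence $i_{max}$ advances by at most one slot per new block, making the realized delay increment a constant $\ell$ per opened slot (hence modular along that sequence); combined with the strictly super-modular, monotone distortion term, $D_i$ would then inherit the two properties on the greedy-reachable allocations required by Theorem~\ref{theorem:newgreedy}. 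Pinning down this restriction, and quantifying the regime (e.g.\ $\ell$ small relative to $2^{B_i}$) in which the per-step distortion decrease provably exceeds the $\ell$ delay increase, is where the genuine difficulty lies.
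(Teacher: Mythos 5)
There is a genuine gap: you prove non-negativity, and you prove monotonicity and super-modularity only for the distortion summand $d_i$, while explicitly leaving the delay term $\ell\, i_{max}$ unresolved ("where the genuine difficulty lies"). Since the lemma is about $D_i$ as a whole, the proposal does not establish the claimed statement; the restriction to "greedy-reachable allocations" and the small-$\ell$ regime you gesture at are not part of the lemma, and importing them would change what is being proved (and what Theorem~\ref{theorem:newgreedy} needs). Your multiplicative rewriting $d_i(S)=2^{B_i}\prod_{j}\psi(R_j)$ and the two-case check of \eqref{eq:usefuldefnsubmod0} via discrete convexity of $\psi$ are correct and considerably more explicit than anything in the paper, but they address the easy half.

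The paper closes the monotonicity gap by leaning on its \emph{dynamic} definition of a resource block, which your sketch circles around but never commits to: a new block for packet $i$ is, by definition, placed either in a slot already in $\cP_i$ or in a fresh slot, \emph{whichever reduces $D_i$ more}. Placing it in an existing slot of $\cP_i$ strictly increases $\sum_{j\in\cP_i}\hat B_i$ without changing $i_{max}$, so that option always decreases $D_i$; the realized placement, being at least as good as that option, therefore cannot increase $D_i$. No assumption on the size of $\ell$ relative to $2^{B_i}$ is needed. For super-modularity the paper simply asserts that convexity of $2^{B_i-\hat B_i}$ together with "linearity" of the delay term suffices; your observation that $\max\{j:j\in\cP_i\}$ is, as a set function, sub-modular rather than linear (and increasing rather than decreasing) is a fair criticism of that step, but identifying the difficulty is not the same as resolving it, and your proposal stops exactly there.
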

\begin{proof}
The non-negativity of $D_i$ is obvious, since the first term is an exponential function, while the second term is linear. To show monotonicity, we need to show that $D_i(\cup_{j \in \cP_i} R_{j} \cup \{r\}) \le D_i(\cup_{j \in \cP_i} R_{j})$ for any packet $i$, where $r$ is a new resource block that is not part of  $\cup_{j \in \cP_i} R_{j}$. By the definition of the resource block as described earlier, a new resource block corresponding to  packet $i$ is either added to the slots that are already allotted to that packet, i.e., $\cP_i$ or to an un-allotted  slot depending on which ever one gives larger decrease in cost. Adding a new resource block to the existing time slot $\cP_i$ clearly increases the number of bits ${\hat B}_i$ for packet $i$ while not increasing the delay, thereby decreasing the cost $D_i$. Thus, adding a new resource block to either $\cP_i$ or an un-allotted slot cannot increase the cost $D_i$, thus proving monotonicity.
The super-modularity of $D_i$ is also easy to see since function $2^{B_i - {\hat B}_i}$ is convex and the delay term is linear. 
\end{proof}

Thus, we have the following Theorem for Problem \ref{eq:discprob}. 

\begin{theorem}\label{theorem:discprob}
The resource block allocation output by the $\greedy$ algorithm \ref{fig:newgreedy} for  Problem \ref{eq:discprob} has cost that is at most $2$ times the optimal cost. 
\end{theorem}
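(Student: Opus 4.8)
The plan is to obtain Theorem \ref{theorem:discprob} as a specialization of Theorem \ref{theorem:newgreedy}: the preceding Lemma supplies exactly the three structural hypotheses that theorem requires, namely that each packet cost $D_i$ is non-negative, monotone, and super-modular in its assigned collection of resource blocks. So the entire content of the proof reduces to two certifications: (i) that Problem \ref{eq:discprob} is genuinely an instance of the multi-partitioning problem of Definition \ref{defn:multipart}, and (ii) that the $\greedy$ algorithm of Fig. \ref{fig:newgreedy} implements the greedy placement rule to which Theorem \ref{theorem:newgreedy} applies. Once both hold, the bound $\sum_i D_i(G_i)\le 2\sum_i D_i(O_i)$ for the greedy allocation $\{G_i\}$ against the optimal $\{O_i\}$ is inherited verbatim from the super-modular/submodular exchange argument underlying Theorem \ref{theorem:newgreedy}, and no separate approximation calculation is needed.

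For (i) I would first pin down the ground set. Since the energy budget caps the number of allocatable blocks at $\lfloor E/e\rfloor$, I would take $\bS$ to be this fixed finite pool of unit-energy blocks and read a partition $S_1,\dots,S_n$ as an assignment of blocks to the $n$ packets, with the per-packet value being $D_i(S_i)$ evaluated under the cost-minimizing distribution of the blocks of $S_i$ across the slots of $\cP_i$ described after \eqref{eq:discprob}. Under this identification $\sum_i D_i(S_i)$ is precisely the objective of \eqref{eq:discprob}, and the two optima coincide. For (ii) I would reconcile the two greedy procedures: $\greedy_1$ feeds blocks in one at a time and places each in the packet minimizing the resulting cost, while $\greedy$ of Fig. \ref{fig:newgreedy} does the same but additionally decides, per candidate packet, whether the new block deepens an occupied slot (step 5) or opens a fresh slot (step 6), taking whichever gives the larger cost decrease (step 7). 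I would argue that this inner choice is exactly the evaluation of $D_i(S_i\cup\{r\})$ under the cost-minimizing placement, so the outer packet-selection in Fig. \ref{fig:newgreedy} is identical to the bin-selection step of $\greedy_1$.

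The main obstacle I anticipate is precisely this reconciliation, and it is sharper than it first appears because the delay term $\ell\, i_{max}$ in \eqref{eq:Di} depends on the index of the \emph{last occupied slot}, not merely on how many blocks a packet holds. Consequently the cost of a block depends on where in time it sits, so $D_i$ is not a function of an abstract, position-free block count, and the reduction to a fixed-ground-set partitioning instance requires genuine care; this positional coupling is exactly why a bespoke $\greedy$ (Fig. \ref{fig:newgreedy}) resolving slot placement internally had to be introduced in place of plain $\greedy_1$. The crux is therefore to show that, after the internal optimization over which slot to deepen or open, the quantity offered to a packet when it receives its $(m{+}1)$-th block is still a legitimate marginal $D_i(S_i\cup\{r\})-D_i(S_i)$ of a single super-modular set function — independent of the order in which its earlier blocks were inserted. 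Establishing this is where the diminishing-returns structure of $2^{B_i-{\hat B}_i}$ and the linearity of the delay term must be invoked, mirroring the super-modularity step of the Lemma; once it is in place, Theorem \ref{theorem:newgreedy} applies and the factor-$2$ guarantee follows.
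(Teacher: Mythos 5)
Your proposal follows exactly the paper's route: Theorem \ref{theorem:discprob} is given no separate proof in the paper and is obtained precisely by the two-step combination you describe, namely the Lemma (non-negativity, monotonicity, super-modularity of each $D_i$) plugged into Theorem \ref{theorem:newgreedy}, which is itself justified only by reference to Theorem \ref{theorem:Nemhauser}. The reconciliation issues you flag --- that the delay term $\ell\, i_{max}$ makes $D_i$ depend on slot positions rather than on an abstract subset of a fixed ground set of $\lfloor E/e\rfloor$ blocks, so both the reduction to the multi-partitioning setting of Definition \ref{defn:multipart} and the equivalence of $\greedy$ with $\greedy_1$ require a real argument --- are genuine and are simply not addressed by the paper; you have correctly identified the gap but, like the paper, have not closed it.
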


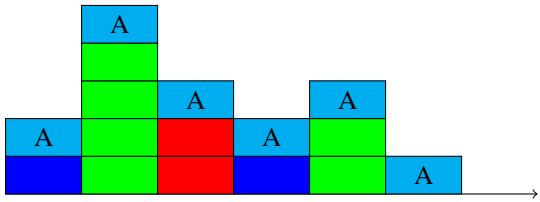
\begin{figure}\label{fig:greedy}
\begin{centering}
\begin{tikzpicture}[rahulstyle/.style={rectangle, draw}]
\draw [->] (0,0) -- (7,0);
\draw [fill=blue,blue, draw=black](0,0) rectangle (1,.5);
\draw [fill=cyan,cyan, draw=black](0,.5) rectangle (1,1);

\draw [fill=green,green, draw=black](1,0) rectangle (2,.5);
\draw [fill=green,green, draw=black](1,.5) rectangle (2,1);
\draw [fill=green,green, draw=black](1,1) rectangle (2,1.5);
\draw [fill=green,green, draw=black](1,1.5) rectangle (2,2);
\draw [fill=cyan,cyan, draw=black](1,2) rectangle (2,2.5);

\draw [fill=red,red, draw=black](2,0) rectangle (3,.5);
\draw [fill=red,red, draw=black](2,.5) rectangle (3,1);
\draw [fill=cyan,cyan, draw=black](2,1) rectangle (3,1.5);

\draw [fill=blue,blue, draw=black](3,0) rectangle (4,.5);
\draw [fill=cyan,cyan, draw=black](3,.5) rectangle (4,1);

\draw [fill=green,green, draw=black](5,0) rectangle (4,.5);
\draw [fill=green,green, draw=black](5,.5) rectangle (4,1);
\draw [fill=cyan,cyan, draw=black](5,1) rectangle (4,1.5);

\draw [fill=cyan,cyan, draw=black](6,0) rectangle (5,.5);

%
%
%
%
%
%
%

\node [right,text width=3cm] at (0.25,.75) {A};
\node [right,text width=3cm] at (1.25,2.25) {A};
\node [right,text width=3cm] at (2.25,1.25) {A};
\node [right,text width=3cm] at (3.25,.75) {A};
\node [right,text width=3cm] at (4.25,1.25) {A};
\node [right,text width=3cm] at (5.25,.25) {A};



\end{tikzpicture}
\caption{Pictorial description of the $\greedy$ algorithm, where A are the candidates slots available for allocation in any iteration.}
\end{centering}
\end{figure}

\section{Simulation Results}
In this section, we present some simulation results for the optimal solution output by convex solvers for Problem \ref{eq:prob}. In Fig. \ref{fig:monotone}, we consider two packets, and total energy $E=50$ Joules. We plot two curves for the overall cost $U$ in Fig. \ref{fig:monotone}, where in each, the size of $B_1$ or $B_2$ is kept fixed, while the other is varied. For both the curves, we fix the order of transmission as packet $1$ first followed by packet $2$. The available energy is insufficient to transmit 
$B_1=15$ bits and $B_2 = 20$ bits completely, and the optimal algorithm sends 
${\hat B}_1 = 13.667$  and ${\hat B}_2= 19.1396$, bits respectively. We consider another two packet setting in Fig. \ref{fig:shortex} with lower energy $E=20$ Joules, where again the energy is insufficient to transmit $B_1=12$ bits and $B_2 = 20$ bits and the optimal algorithm sends 
${\hat B}_1 = 7.663$ bits and ${\hat B}_2= 15.8431 $ bits. 
The inference to draw from Figs. \ref{fig:monotone} and \ref{fig:shortex}, that for both the curves in both Figs., sending the shorter packet first is optimal in terms of minimizing the cost, which is intuitive given the cost function, however, is difficult to prove. In Fig. \ref{fig:e-tradeoff}, for two packets, we plot $U_1$, $U_2$ as a function of time and energy dedicated to the first packet $E_1$ and $t_1$, where $E_1+E_2=E$. Here again, we see that sending the shorter packet first is optimal in terms of minimizing the cost.

Finally, in Fig. \ref{fig:e-tradeoff}, for two packets, we plot the energy and time allotted to the two packets by the optimal algorithm, assuming bits of packet $1$ are sent before packet $2$'s. Blue curve is for $U_1$ and red for $U_2$, and where solid triangle and $\star$ and the value of $U_1$ evaluated at $(E_1, t_1)$ and $U_2$ at $E-E_1, t_2$, output by the algorithm. The surface curves in Fig. \ref{fig:e-tradeoff} also show the joint convexity of the considered cost function.
\begin{figure}[h]
\centering
\includegraphics[width=3in]{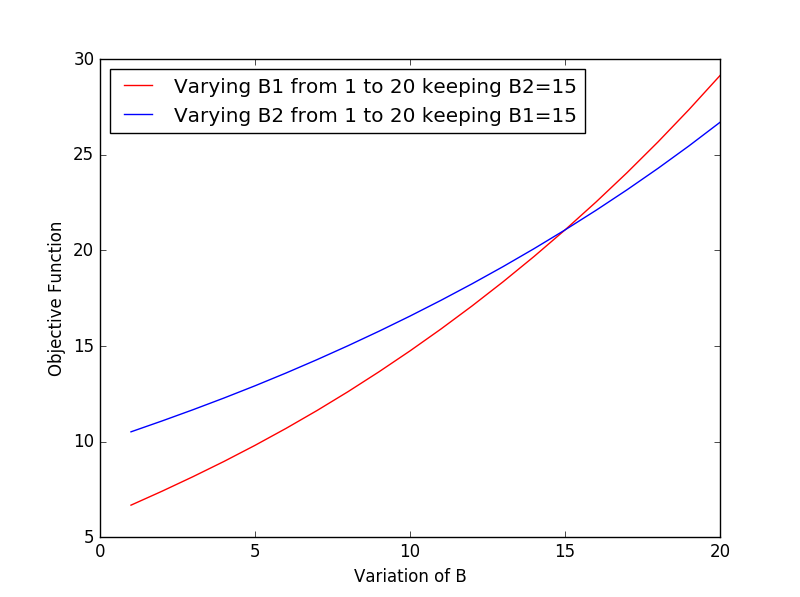}
\caption{Two packets with $B_1=15$ bits and $B_2 = 20$ bits with total energy $E=50$ Joules, where sending the shorter packet first is optimal for Problem \ref{eq:prob}.}
\label{fig:monotone}
\end{figure}
\begin{figure}[h]
\centering
\includegraphics[width=3in]{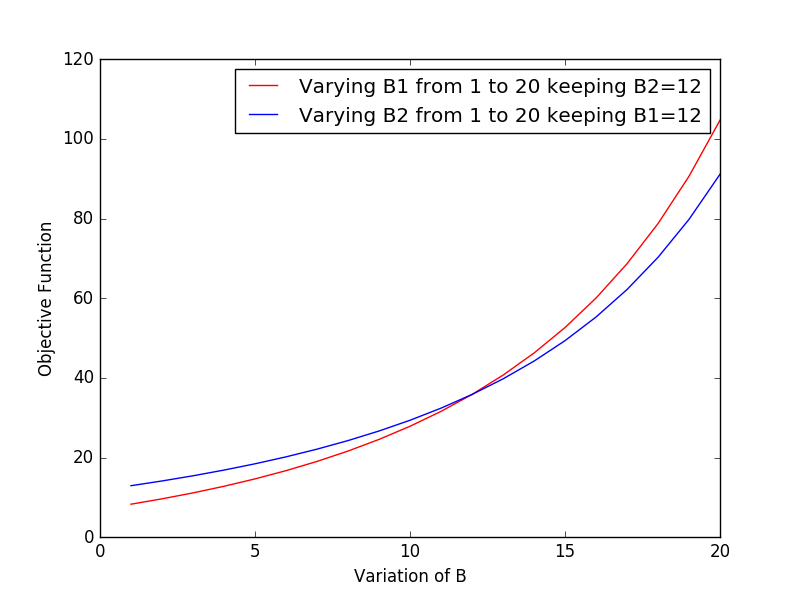}
\caption{Two packets with $B_1=12$ bits and $B_2 = 20$ bits with total energy $E=20$ Joules, where sending the shorter packet first is optimal for Problem \ref{eq:prob}.}
\label{fig:shortex}
\end{figure}

\begin{figure}[h]
\centering
\includegraphics[width=3.5in]{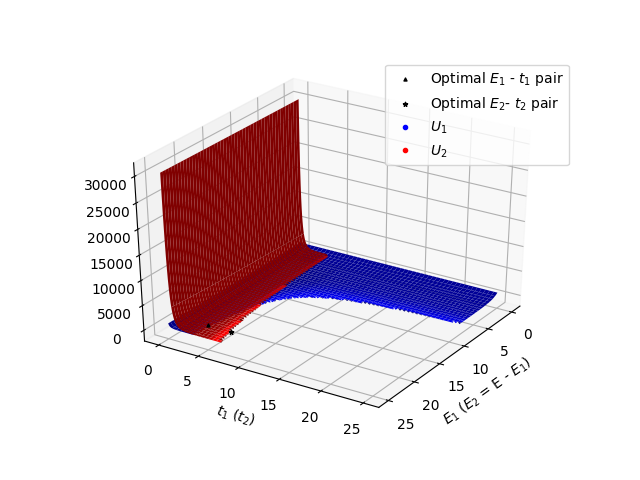}
\caption{Illustration of energy-time allocated to different packets for Problem \ref{eq:prob}.}
\label{fig:e-tradeoff}
\end{figure}

\section{Conclusions}
In this paper, we have introduced a somewhat unexplored problem of energy-distortion tradeoff under a delay cost, that is closely related to the rate-distortion problem with finite delays. 
This paper presents only preliminary and limited results on the considered problem. An important goal of this paper is to attract attention towards the considered problem, which we believe is not only practically relevant but also theoretically challenging, since it covers two fundamental and closely related important problems: rate-distortion problem with finite delays and generalized age of information problem.

\bibliographystyle{IEEEtran}
\bibliography{Research}

\appendices
\section{Proof of Theorem \ref{thm:jointconv}}\label{app:jointconv}
Here we give a proof about joint convexity of cost function $U$, where
$U=\sum_{i=1}^{n}2^{B_{i}-\hat{B_{i}}}+a_{i}\ell_{i}$, for some constant $a_i > 0$ that depends on the order of transmission of packets $\pi$.

\vspace{0.1in}

There are $2n$ unknown variables ($e_{i}s$ and $\ell_{i}s$ for each
packet). So Hessian matrix is $2n\times2n$ sized matrix given by

\vspace{0.1in}

$H = \begin{bmatrix}\text{\ensuremath{\frac{d^{2}U}{de_{1}^{2}}}} & \text{\ensuremath{\frac{d^{2}U}{d\ell_1de_{1}}}} & 0 & 0 & 0 & 0\\
\text{\ensuremath{\frac{d^{2}U}{de_{1}d\ell_1}}} & \frac{d^{2}U}{d\ell_1^{2}} & 0 & 0 & 0 & 0\\
0 & 0 & - & - & 0 & 0\\
0 & 0 & - & - & 0 & 0\\
0 & 0 & 0 & 0 & \text{\ensuremath{\frac{d^{2}U}{de_{n}^{2}}}} & \text{\ensuremath{\frac{d^{2}U}{d\ell_{n}de_{n}}}}\\
0 & 0 & 0 & 0 & \ensuremath{\frac{d^{2}U}{de_{n}d\ell_{n}}} & \frac{d^{2}U}{d\ell_{n}^{2}}
\end{bmatrix}$

\vspace{0.1in}

Clearly, the Hessian matrix is block diagonal, and since eigen-values
of block diagonal matrix are eigen-values of each block, to prove the
joint convexity, we need to prove that the eigen-values of each block
are positive, to prove that the Hessian matrix is positive semi-definite. 

Thus, it is sufficient to show that any of the block, say the first block 
$H_1=\begin{bmatrix}\ensuremath{\frac{d^{2}U}{de_{1}^{2}}} & \text{\ensuremath{\frac{d^{2}U}{d\ell_1de_{1}}}}\\
\text{\ensuremath{\frac{d^{2}U}{de_{1}d\ell_1}}} & \frac{d^{2}U}{d\ell_1^{2}},
\end{bmatrix}$
is positive-definite to prove the joint-convexity of $H$.

We use the Sylvester's
criterion for this purpose, that states that a $m\times m$ Hermitian matrix $S$ is positive-definite
if and only if all the upper left $k\times k$ corner of $S$ $\forall k, 1\le k \le n$ have a positive determinant:
Writing derivative terms
\vspace{0.1in}

$\frac{d^{2}U}{d e_1^2}=P\frac{(\ell_1+1)\ell_1}{(\ell_{1}+e_{1})^{2}}$,

\vspace{0.1in}

$\frac{d^{2}U}{de_{1}d\ell_1}=\frac{d^{2}U}{d\ell_1de_{1}}=-P\left(\frac{(\ell_1+1)e_{1}}{(\ell_1+e_{1})^{2}}+\frac{\ell_1}{\ell_1+e_{1}}\log_{2}(\frac{\ell_1}{\ell_1+e_{1}})\right),$ 

\vspace{0.1in}

$\frac{d^{2}U}{d\ell_1^{2}}=P\left(\left(\log_{2}(\frac{\ell_1}{\ell_1+e_{1}})+\frac{e_{1}}{e_{1}+\ell_1}\right)^{2}+\frac{e_{1}^{2}}{(\ell_1+e_{1})^{2}\ell_1}\right),$

where $P=2^{B_{1}}\left(\frac{e_{1}+\ell_1}{\ell_1}\right)^{-\ell_1}$.

Now using Sylvester's criteria, upper left $1\times1$ matrix is $[\ensuremath{\frac{d^{2}U}{de_{1}^{2}}}]$
which clearly has positive determinant. Only thing left in prove is
to show that $H_1$ has a positive determinant. $\det(H_1)$
\begin{align*} &=P^{2}\frac{(\ell_1+1)\ell_1}{(l_{1+}e_{1})^{2}}\left(\log_{2}^{2}\left(\frac{\ell_1}{\ell_1+e_{1}}\right)+2\log_{2}\left(\frac{\ell_1}{\ell_1+e_{1}}\right)\frac{e_{1}}{e_{1}+\ell_1}\right. \\
& \left. +\frac{e_{1}^{2}}{(\ell_1+e_{1})^{2}}+\frac{e_{1}^{2}}{(\ell_1+e_{1})^{2}\ell_1}\right)
\\ &-P^{2}\left(\frac{\ell_1^{2}}{(\ell_1+e_{1})^{2}}\log_{2}^{2}\left(\frac{\ell_1}{\ell_1+e_{1}}\right)\right. 
\\ &\left.+\frac{(\ell_1+1)^{2}e_{1}^{2}}{(\ell_1+e_{1})^{4}}+2\log_{2}\left(\frac{\ell_1}{\ell_1+e_{1}}\right)\frac{(\ell_1+1)\ell_1e_{1}}{(\ell_1+e_{1})^{3}}\right),\\
&= P^{2}\left(\frac{\ell_1}{(\ell_1+e_{1})^{2}}\log_{2}^{2}\left(\frac{\ell_1}{\ell_1+e_{1}}\right)\right),
\end{align*}
which is clearly positive, hence joint convexity of our objective
function is proved.


\end{document}